\newtheorem{Thm}{Theorem}
\newtheorem{Cor}{Corollary}
\theoremstyle{definition}
\newcommand{\bra}[1]{{\left\langle #1 \right|}}
\newcommand{\ket}[1]{{\left| #1 \right\rangle}}
\newcommand{\T}{\mbox{$\mathrm{tr}$}}
\newcommand{\B}{\mbox{$\mathbb B$}}
\begin{document}
\title{Weighted polygamy inequalities of multiparty entanglement in arbitrary dimensional quantum systems}

\author{Jeong San Kim}
\email{freddie1@khu.ac.kr} \affiliation{
 Department of Applied Mathematics and Institute of Natural Sciences, Kyung Hee University, Yongin-si, Gyeonggi-do 17104, Korea
}
\date{\today}

\begin{abstract}
We provide a generalization for the polygamy constraint of multiparty entanglement in arbitrary dimensional quantum systems.
By using the $\beta$th-power of entanglement of assistance for $0\leq \beta \leq1$ and the Hamming weight of the binary vector related with the distribution of subsystems, we establish a class of weighted polygamy inequalities of multiparty entanglement in arbitrary dimensional quantum systems. We further show that our class of weighted polygamy inequalities can even be improved to be tighter inequalities
with some conditions on the assisted entanglement of bipartite subsystems.
\end{abstract}

\pacs{
03.67.Mn,  
03.65.Ud 
}
\maketitle

\section{Introduction}
\label{Intro}
One intrinsic feature of quantum entanglement is
the limited shareability of bipartite entanglement in multiparty quantum systems.
This distinct property of quantum entanglement without any classical counterpart is known as
the {\em monogamy of entanglement}(MoE)~\cite{T04, KGS}.

MoE is mathematically characterized in a quantitative way; for a given three-party quantum state $\rho_{ABC}$ with its reduced density matrices $\rho_{AB}=\T_C \rho_{ABC}$ and $\rho_{AC}=\T_B \rho_{ABC}$,
\begin{align}
E\left(\rho_{A|BC}\right)\geq E\left(\rho_{A|B}\right)+E\left(\rho_{A|C}\right)
\label{MoE}
\end{align}
where $E\left(\rho_{A|BC}\right)$ is the bipartite entanglement between subsystems $A$ and $BC$, and
$E\left(\rho_{A|B}\right)$ and $E\left(\rho_{A|C}\right)$ are the bipartite entanglement between $A$ and $B$
and between $A$ and $C$, respectively.
The {\em monogamy inequality} in~(\ref{MoE}) shows a mutually exclusive relation of the bipartite entanglement
between $A$ and each of $B$ and $C$(that is, $E\left(\rho_{A|B}\right)$ and $E\left(\rho_{A|C}\right)$, respectively),
so that their summation cannot exceeds the total entanglement between $A$ and $BC$(measured by $E\left(\rho_{A|BC}\right)$ ).

The first monogamy inequality was established in three-qubit systems using {\em tangle} as the bipartite entanglement measure~\cite{CKW}.
Later, it was generalized for multiqubit systems, and
some cases of higher-dimensional quantum systems in terms of various bipartite entanglement measures~\cite{OV, KS, KDS, KSRenyi, KT, KSU}.

Whereas MoE reveals the limited shareability of entanglement in multiparty quantum systems,
the {\em assisted entanglement}, which is a dual amount to bipartite entanglement measures, is also known to have
a dually monogamous property in multiparty quantum systems, namely, {\em polygamy of entanglement}(PoE).
PoE is also mathematically characterized as {\em polygamy inequality};
\begin{align}
E_a\left(\rho_{A|BC}\right)\leq E_a\left(\rho_{A|B}\right)
+E_a\left(\rho_{A|C}\right),
\label{PoE}
\end{align}
for a three-party quantum state $\rho_{ABC}$ where $E_a\left(\rho_{A|BC}\right)$ is the assisted entanglement~\cite{GMS}.

The polygamy inequality in~(\ref{PoE}) was first proposed in three-qubit systems using {\em tangle of assistance}~\cite{GMS},
and generalized into multiqubit systems in terms of various assisted entanglements~\cite{GBS, KT, KSU}.
For quantum systems beyond qubits, a general polygamy inequality of multiparty entanglement in arbitrary
dimensional quantum systems was established using entanglement of assistance~\cite{BGK, KimGP, KimGP16}.

One main difficulty in studying entanglement in multiparty quantum systems is that
there are several inequivalent classes of genuine multiparty quantum entanglement that are not convertible to each other
by means of stochastic local operations and classical communications(SLOCC)~\cite{DVC};
for example, there are two inequivalent classes of genuine three-party pure entangled states in three-qubit systems~\cite{DVC}. One is
the {\em Greenberger-Horne-Zeilinger}(GHZ) class~\cite{GHZ},
and the other one is the W-class~\cite{DVC}.
The existence of these inequivalent classes make us infeasible to directly compare the amount of entanglement from different classes,
which also implies the hardness of having a universal way to quantify multiparty quantum entanglement, even abstractly.

Although this characterization is due to the interconvertibility under SLOCC,
these inequivalent classes of genuine three-qubit entangled states also reveal different characters in terms of entanglement
monogamy and polygamy. The tangle-based monogamy and polygamy inequalities of three-qubit entanglement in Inequalities (\ref{MoE})
and (\ref{PoE}) are saturated (thus they hold as equalities) by the W-class states,
whereas the differences between terms can assume their largest values for the GHZ-class states.

The saturation of the monogamy and polygamy inequalities for W-class states implies that
this type of genuine three-qubit entanglement can be complete characterized by means of the bipartite ones within it,
which is not the case for the GHZ-class states, the other type of genuine three-qubit entanglement.
Thus entanglement monogamy and polygamy are not just distinct phenomena in multipartite quantum systems, but they also provide us an efficient way to qualify multipartite entanglements from different classes.

For the case of multi-qubit W-class states more that three qubits, the tangle-based monogamy and polygamy inequalities are also saturated
by this class, and thus an analogous interpretation can be applied. However, tangle is known to fail in generalizing the monogamy inequality into higher-dimensional systems more than qubits~\cite{KDS}. This imposes the importance of having proper bipartite entanglement quantifications
showing tight monogamy and polygamy inequalities for an efficient characterization of multiparty entanglements
from different classes even in high-dimensional quantum systems.

Recently, monogamy and polygamy inequalities of multiqubit entanglement were generalized
in terms of non-negative power of entanglement measures and assisted entanglements;
it was shown that the $\alpha$th-power of the entanglement of formation and concurrence can be used to establish multiqubit monogamy inequalities for $\alpha \geq \sqrt{2}$ and $\alpha \geq 2$, respectively~\cite{Fei}.
Later, tight classes of monogamy and polygamy inequalities of multiqubit entanglement using non-negative power of various entanglement measures were also proposed~\cite{Fei2, KimPS18, KimPU18}. However, the validity of this tight generalization of entanglement constraints beyond qubit systems is still unclear.

Here, we provide a tight polygamy constraint of multiparty entanglement in arbitrary dimensional quantum systems.
By using the $\beta$th-power of entanglement of assistance for $0\leq \beta \leq1$ and the Hamming weight of the binary vector related with the distribution of subsystems, we establish a class of weighted polygamy inequalities of multiparty entanglement in arbitrary dimensional quantum systems. We further show that our class of weighted polygamy inequalities can even be improved to be tighter inequalities
with some conditions on the assisted entanglement of bipartite subsystems.

The paper is organized as follows. In Sec.~\ref{Sec: poly}, we review the polygamy constraints of multiparty quantum entanglement
based on tangle and entanglement of assistance.
In Sec.~\ref{Sec: WPoly}, we first provide some notations and
definitions about binary vectors as well as its Hamming weight, and establish a class of weighted polygamy inequalities
of multiparty entanglement using the $\beta$th-power of entanglement of assistance for $0\leq \beta \leq1$.
We also show that our class of weighted polygamy inequalities can be improved to be tighter inequalities
with some conditions on the assisted entanglement of bipartite subsystems.
Finally, we summarize our results in Sec.~\ref{Sec: Conclusion}.

\section{Polygamy of multiparty Quantum Entanglement}
\label{Sec: poly}

The first polygamy inequality was established in three-qubit systems~\cite{GMS};
for a three-qubit pure state $\ket{\psi}_{ABC}$,
\begin{equation}
\tau\left(\ket{\psi}_{A|BC)}\right)\le\tau_a\left(\rho_{A|B}\right)
+\tau_a\left(\rho_{A|C}\right), \label{3dual}
\end{equation}
where
\begin{align}
\tau\left(\ket{\psi}_{A|BC}\right)=4\det\rho_A
\label{tangle}
\end{align}
is the tangle of the pure state $\ket{\psi}_{ABC}$ between $A$ and $BC$, and
\begin{align}
\tau_a\left(\rho_{A|B}\right)=\max \sum_i p_i\tau\left({\ket{\psi_i}_{A|B}}\right)
\label{tanglemix}
\end{align}
is the tangle of assistance of $\rho_{AB}=\T_C\ket{\psi}_{ABC}\bra{\psi}$ with the maximum taken over all
possible pure-state decompositions of $\rho_{AB}=\sum_{i}p_i\ket{\psi_i}_{AB}\bra{\psi_i}$.
Later, Inequality~(\ref{3dual}) was generalized into multiqubit
systems~\cite{GBS}
\begin{align}
\tau_a\left(\rho_{A_1|A_2\cdots A_n}\right)
\leq&\sum_{i=2}^{n}\tau_a \left(\rho_{A_1|A_i}\right),
\label{ntpoly}
\end{align}
for an arbitrary multiqubit mixed state $\rho_{A_1\cdots A_n}$ and its two-qubit reduced density matrices $\rho_{A_1A_i}$
with $i=2,\ldots, n$.

For polygamy inequality beyond qubits, it was shown that von Neumann entropy can be used
to establish a polygamy inequality of three-party quantum systems~\cite{BGK};
for any three-party pure state $\ket{\psi}_{ABC}$ of arbitrary dimensions, we have
\begin{align}
E\left(\ket{\psi}_{A|BC}\right)\leq& E_a\left(\rho_{A|B}\right)+E_a\left(\rho_{A|C}\right),
\label{EoApoly3}
\end{align}
where
\begin{equation}
E\left(\ket{\psi}_{A|BC}\right)=S\left(\rho_A\right)
\label{eent}
\end{equation}
is the entropy of entanglement between $A$ and $BC$ in terms of the von Neumann entropy
\begin{equation}
S(\rho)=-\T \rho\log\rho,
\label{von}
\end{equation}
and $E_a(\rho_{A|B})$ is the entanglement of assistance(EoA)
of $\rho_{AB}$ defined as~\cite{cohen}
\begin{equation}
E_a(\rho_{A|B})=\max \sum_{i}p_i E\left(\ket{\psi_i}_{A|B}\right)
\label{eoa}
\end{equation}
with the maximization over all possible pure state
decompositions of  $\rho_{AB}=\sum_{i}p_i\ket{\psi_i}_{AB}\bra{\psi_i}$.
Later, a general polygamy inequality of multiparty quantum entanglement was established as
\begin{align}
E_a\left(\rho_{A_1|A_2\cdots A_n}\right)
\leq& \sum_{i=2}^{n} E_a \left(\rho_{A_1|A_i}\right),
\label{EoApoly}
\end{align}
for any multiparty quantum state $\rho_{A_1A_2\cdots A_n}$ of arbitrary dimension~\cite{KimGP}.

\section{Weighted Polygamy Constraints of multiparty Quantum Entanglement}
\label{Sec: WPoly}
Based on the binary expression of any nonnegative integer $j$,
\begin{align}
j=\sum_{i=0}^{n-1} j_i 2^i
\label{bire2}
\end{align}
such that $\log_{2}j \leq n$ and $j_i \in \{0, 1\}$ for $i=0, \ldots, n-1$, we define a unique binary vector
$\overrightarrow{j}$ associated with $j$ as
\begin{align}
\overrightarrow{j}=\left(j_0,~j_1,\ldots,j_{n-1}\right).
\label{bivec}
\end{align}
For the binary vector $\overrightarrow{j}$ in  Eq.~(\ref{bivec}), its {\em Hamming weight}~\cite{nc},
$\omega_{H}\left(\overrightarrow{j}\right)$, is defined as the number of $1's$ in its coordinates, that is, the number of $1's$ in $\{ j_0,~j_1,\ldots,j_{n-1} \}$.

The following theorem states that a class of weighted polygamy inequalities of multiparty entanglement in arbitrary dimension can be established using the $\beta$th-power of EoA and the Hamming weight of the binary vector related with the distribution of subsystems.

\begin{Thm}
For $0 \leq \beta \leq 1$ and any $N+1$-party quantum state $\rho_{A\B}$ where $\B$ consists of $N$-party subsystems,
there exists a proper ordering of
the $N$-party subsystems $\B=\{B_0, \cdots, B_{N-1}\}$ such that
\begin{equation}
\left(E_a\left(\rho_{A|B_0 B_1 \cdots B_{N-1}}\right)\right)^{\beta}  \leq
\sum_{j=0}^{N-1}{\beta}^{\omega_{H}\left(\overrightarrow{j}\right)}\left(E_a\left(\rho_{A|B_j}\right)\right)^{\beta}.
\label{peoapoly0}
\end{equation}
\label{tpoly0}
\end{Thm}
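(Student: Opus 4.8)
The plan is to reduce~(\ref{peoapoly0}) to the general polygamy inequality~(\ref{EoApoly}) together with a single scalar estimate for the map $t\mapsto t^{\beta}$, packaged into an induction on $N$ that mirrors the binary expansion of the indices $j$.

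First I would record the scalar tool: for $0\le\beta\le1$ and real numbers $x\ge y\ge0$ one has $(x+y)^{\beta}\le x^{\beta}+\beta\,y^{\beta}$. Writing $t=y/x\in[0,1]$ this is $(1+t)^{\beta}\le 1+\beta t^{\beta}$, a standard elementary inequality (the same one used for the $\beta$th-power sharpenings in the qubit case); it is exactly this step that produces one factor of $\beta$ each time a subsystem is pushed to the opposite side of a bipartition, which is what ultimately yields the exponent $\omega_{H}(\overrightarrow{j})$.

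Next I would argue by induction on $N$, the base case $N=1$ being an equality. For $N\ge2$ pick the least $n$ with $N\le2^{n}$, so $n\ge1$ and $N=2^{n-1}+m$ with $1\le m\le 2^{n-1}$. I would then fix the ordering so that, with $S_{0}=\{B_{0},\dots,B_{2^{n-1}-1}\}$ and $S_{1}=\{B_{2^{n-1}},\dots,B_{N-1}\}$, one has $E_a(\rho_{A|S_{0}})\ge E_a(\rho_{A|S_{1}})$: take $S_{1}$ to be an $m$-element subset of $\B$ of least EoA, and use that EoA does not decrease when the assisting subsystem is enlarged, so that $E_a(\rho_{A|S_{0}})\ge E_a(\rho_{A|Q})\ge E_a(\rho_{A|S_{1}})$ for any $m$-element $Q\subseteq S_{0}$; then reorder recursively inside $S_{0}$ and inside $S_{1}$. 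Applying~(\ref{EoApoly}) with $S_{0}$ and $S_{1}$ each regarded as a single party gives $E_a(\rho_{A|\B})\le E_a(\rho_{A|S_{0}})+E_a(\rho_{A|S_{1}})$, and the scalar lemma then yields $\bigl(E_a(\rho_{A|\B})\bigr)^{\beta}\le\bigl(E_a(\rho_{A|S_{0}})\bigr)^{\beta}+\beta\bigl(E_a(\rho_{A|S_{1}})\bigr)^{\beta}$.

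Finally I would invoke the induction hypothesis on $S_{0}$ ($2^{n-1}$ subsystems, length-$(n-1)$ binary vectors) and on $S_{1}$ ($m$ subsystems, padding its binary vectors with leading zeros to length $n-1$), and then reconcile the Hamming weights: an index $j<2^{n-1}$ has top bit $0$, so $\omega_{H}(\overrightarrow{j})$ is unaffected by the extra coordinate; an index $j=2^{n-1}+k$ has top bit $1$, so $\omega_{H}(\overrightarrow{j})=1+\omega_{H}(\overrightarrow{k})$, which is precisely absorbed by the leading factor $\beta$ in front of the $S_{1}$-sum. Collecting the two sums gives~(\ref{peoapoly0}). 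I expect the real obstacle to be this last bookkeeping step — making the recursive split match $\omega_{H}$ for all $N$, not just powers of $2$ — together with the accompanying point that the ordering can always be chosen so that the larger half sits on the side receiving the smaller weights.
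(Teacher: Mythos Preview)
Your argument has a genuine gap: the monotonicity claim ``EoA does not decrease when the assisting subsystem is enlarged'' is false. Take $\rho_{AB_0B_1}=\tfrac{1}{2}I_A\otimes|\phi^{+}\rangle\langle\phi^{+}|_{B_0B_1}$ with $|\phi^{+}\rangle=\tfrac{1}{\sqrt2}(|00\rangle+|11\rangle)$. Every pure state in the support of $\rho_{AB_0B_1}$ is of the form $|\psi\rangle_A\otimes|\phi^{+}\rangle_{B_0B_1}$, so $E_a(\rho_{A|B_0B_1})=0$; yet $\rho_{AB_0}=\tfrac{1}{4}I_{AB_0}$ admits a Bell-state decomposition, giving $E_a(\rho_{A|B_0})=1$. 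Thus $E_a(\rho_{A|Q})\le E_a(\rho_{A|S_0})$ for $Q\subseteq S_0$ fails, and with it your guarantee that the split can be chosen with $E_a(\rho_{A|S_0})\ge E_a(\rho_{A|S_1})$. Without that ordering the scalar lemma puts the factor $\beta$ on the wrong block, and the Hamming-weight bookkeeping collapses.

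The paper sidesteps this entirely by applying the full polygamy inequality~(\ref{EoApoly}) \emph{once} at the outset, reducing the problem to the purely scalar statement
\[
\Bigl(\sum_{j=0}^{N-1}a_j\Bigr)^{\beta}\le\sum_{j=0}^{N-1}\beta^{\omega_H(\overrightarrow{j})}a_j^{\beta},\qquad a_j:=E_a(\rho_{A|B_j}),
\]
for $a_0\ge a_1\ge\cdots\ge a_{N-1}\ge0$. The halving induction then compares $\sum_{j<2^{n-1}}a_j$ with $\sum_{j\ge2^{n-1}}a_j$, where the needed inequality is trivial from the decreasing ordering of the \emph{individual} $a_j$'s; no block-EoA monotonicity is ever invoked. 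General $N$ is handled by padding with a product state so that the extra $a_j$'s vanish. Your recursion can be salvaged by exactly this move---replace the block quantities $E_a(\rho_{A|S_0})$, $E_a(\rho_{A|S_1})$ by the corresponding partial sums of the $a_j$'s---at which point it becomes the paper's proof.
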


\begin{proof}
Let us consider the ordering of the $N$-party subsystems $\B=\{B_0, \cdots, B_{N-1}\}$ where the EoA's between $A$ and each $B_j$ are in decreasing order, that is,
\begin{align}
E_a\left(\rho_{A |B_j}\right)\geq E_a\left(\rho_{A |B_{j+1}}\right)\geq0
\label{pordern2}
\end{align}
for each $j=0, \ldots , N-2$.

From the monotonicity of the function $f(x)=x^{\beta}$ for $0 \leq \beta \leq 1$ and
Inequality~(\ref{EoApoly}), we have
\begin{align}
\left(E_a\left(\rho_{A|B_0B_1\cdots B_{N-1}}\right)\right)^{\beta}\leq&
\left(\sum_{j=0}^{N-1}E_a\left(\rho_{A|B_j}\right)\right)^{\beta},
\label{qbitalppoly1}
\end{align}
therefore, it is enough to show that
\begin{align}
\left(\sum_{j=0}^{N-1}E_a\left(\rho_{A|B_j}\right)\right)^{\beta}\leq&
\sum_{j=0}^{N-1}{\beta}^{\omega_{H}\left(\overrightarrow{j}\right)}\left(E_a\left(\rho_{A|B_j}\right)\right)^{\beta}.
\label{qbitalppoly2}
\end{align}

We first prove Inequality~(\ref{qbitalppoly2}) for the case that $N=2^n$, a power of 2, by using mathematical induction on $n$, and extend the result for any positive integer $N$.
For $n=1$ and a three-party state $\rho_{AB_0B_1}$ with bipartite reduced density matrices $\rho_{AB_0}$ and $\rho_{AB_1}$,
we have
\begin{align}
&\left(E_a\left(\rho_{A|B_0}\right)+E_a\left(\rho_{A|B_1}\right)\right)^{\beta}\nonumber\\
&~~~~~~~~~~=\left(E_a\left(\rho_{A|B_0}\right)\right)^{\beta}\left(1+\frac{E_a\left(\rho_{A|B_1}\right)}
{E_a\left(\rho_{A|B_0}\right)} \right)^{\beta}.
\label{p3scpoly1}
\end{align}

Because the ordering in Inequality~(\ref{pordern2}) assures~\cite{zerook}
\begin{equation}
0\leq\frac{E_a\left(\rho_{A|B_1}\right)}{E_a\left(\rho_{A|B_0}\right)}\leq 1,
\label{conoder0}
\end{equation}
Eq.~(\ref{p3scpoly1}) leads us to
\begin{align}
&\left(E_a\left(\rho_{A|B_0}\right)+E_a\left(\rho_{A|B_1}\right)\right)^{\beta}\nonumber\\
&~~~~~~~~~~~\leq\left(E_a\left(\rho_{A|B_0}\right)\right)^{\beta}+
\beta\left(E_a\left(\rho_{A|B_1}\right)\right)^{\beta},
\label{p3scpoly3}
\end{align}
where the inequality is due to
\begin{align}
\left(1+x\right)^{\beta}\leq 1+\beta x^{\beta},
\label{betle1}
\end{align}
for any $x \in \left[0,1\right]$ and $0\leq\beta \leq1$.
Inequality~(\ref{p3scpoly3}) recovers Inequality~(\ref{qbitalppoly2}) for $N=2$, that is, $n=1$.

Now we assume the validity of Inequality~(\ref{qbitalppoly2}) for $N=2^{n-1}$ with $n\geq 2$, and consider the case that $N=2^n$.
For an $(N+1)$-party quantum state $\rho_{AB_0B_1 \cdots B_{N-1}}$ and its bipartite reduced density matrices $\rho_{AB_j}$ with $j=0, \ldots, N-1$, the ordering of subsystems in Inequality~(\ref{pordern2}) assures that
\begin{equation}
0\leq\frac{\sum_{j=2^{n-1}}^{2^n-1}E_a\left(\rho_{A|B_j}\right)}
{\sum_{j=0}^{2^{n-1}-1}E_a\left(\rho_{A|B_j}\right)}\leq 1.
\label{conoder1}
\end{equation}
Thus we have
\begin{widetext}
\begin{align}
\left(\sum_{j=0}^{2^n-1}E_a\left(\rho_{A|B_j}\right)\right)^{\beta}
=&\left(\sum_{j=0}^{2^{n-1}-1}E_a\left(\rho_{A|B_j}\right)\right)^{\beta}
\left(1+\frac{\sum_{j=2^{n-1}}^{2^n-1}E_a\left(\rho_{A|B_j}\right)}
{\sum_{j=0}^{2^{n-1}-1}E_a\left(\rho_{A|B_j}\right)}\right)^{\beta}\nonumber\\
\leq& \left(\sum_{j=0}^{2^{n-1}-1}E_a\left(\rho_{A|B_j}\right)\right)^{\beta}
\left[1+\beta\left(\frac{\sum_{j=2^{n-1}}^{2^n-1}E_a\left(\rho_{A|B_j}\right)}
{\sum_{j=0}^{2^{n-1}-1}E_a\left(\rho_{A|B_j}\right)}\right)^{\beta}\right]\nonumber\\
=&\left(\sum_{j=0}^{2^{n-1}-1}E_a\left(\rho_{A|B_j}\right)\right)^{\beta}
+\beta\left(\sum_{j=2^{n-1}}^{2^{n}-1}E_a\left(\rho_{A|B_j}\right)\right)^{\beta},
\label{pnscpoly}
\end{align}
\end{widetext}
where the inequality is due to Inequality~(\ref{betle1}).

From the induction hypothesis, we have
\begin{align}
\left(\sum_{j=0}^{2^{n-1}-1}E_a\left(\rho_{A|B_j}\right)\right)^{\beta}\leq&
\sum_{j=0}^{2^{n-1}-1}{\beta}^{\omega_{H}\left(\overrightarrow{j}\right)}\left(E_a\left(\rho_{A|B_j}\right)\right)^{\beta}.
\label{pnscpoly2}
\end{align}
Moreover, the second summation in the last line of~(\ref{pnscpoly}) is a summation of $2^{n-1}$ terms,
therefore the induction hypothesis also guarantees
\begin{align}
\left(\sum_{j=2^{n-1}}^{2^{n}-1}E_a\left(\rho_{A|B_j}\right)\right)^{\beta}\leq&
\sum_{j=2^{n-1}}^{2^{n}-1}{\beta}^{\omega_{H}\left(\overrightarrow{j}\right)-1}\left(E_a\left(\rho_{A|B_j}\right)\right)^{\beta}.
\label{pnscpoly3}
\end{align}
(Possibly, we may index and reindex subsystems to get Inequality~(\ref{pnscpoly3}), if necessary.)

From Inequalities~(\ref{pnscpoly}),~ (\ref{pnscpoly2}) and (\ref{pnscpoly3}), we have
\begin{align}
\left(\sum_{j=0}^{2^n-1}E_a\left(\rho_{A|B_j}\right)\right)^{\beta}\leq&
\sum_{j=0}^{2^n-1}{\beta}^{\omega_{H}\left(\overrightarrow{j}\right)}\left(E_a\left(\rho_{A|B_j}\right)\right)^{\beta},
\label{qditp2poly}
\end{align}
which recovers Inequality~(\ref{qbitalppoly2}) for the case that $N=2^n$.

Now let us consider an arbitrary positive integer $N$ and a $(N+1)$-party quantum state $\rho_{AB_0B_1\cdots B_{N-1}}$. We first note that we can always consider a power of $2$ that is an upper bound of $N$, that is $0\leq N \leq 2^{n}$ for some $n$. We also consider a $(2^{n}+1)$-party
quantum state
\begin{align}
\gamma_{AB_0 B_1 \cdots B_{2^n-1}}=\rho_{AB_0B_1\cdots B_{N-1}}\otimes\sigma_{B_N \cdots B_{2^n-1}},
\label{gamma}
\end{align}
which is a product of $\rho_{AB_0B_1\cdots B_{N-1}}$ and an arbitrary $(2^n-N)$-party quantum state $\sigma_{B_N \cdots B_{2^n-1}}$.

Because $\gamma_{AB_0 B_1 \cdots B_{2^n-1}}$ is a $(2^{n}+1)$-party quantum state, Inequality~(\ref{qditp2poly}) leads us to
\begin{equation}
\left(E_a\left(\gamma_{A|B_0 B_1 \cdots B_{2^n-1}}\right)\right)^{\beta}  \leq
\sum_{j=0}^{2^n-1}{\beta}^{\omega_{H}\left(\overrightarrow{j}\right)}
\left(E_a\left(\gamma_{A|B_j}\right)\right)^{\beta},
\label{gapoly}
\end{equation}
where $\gamma_{AB_j}$ is the bipartite reduced density matric of $\gamma_{AB_0 B_1 \cdots B_{2^n-1}}$ for each $j= 0, \ldots, 2^n-1$.
Moreover, $\gamma_{AB_0 B_1 \cdots B_{2^n-1}}$ is a product state of $\rho_{AB_0B_1\cdots B_{N-1}}$ and $\sigma_{B_N \cdots B_{2^n-1}}$,
which implies
\begin{align}
E_a\left(\gamma_{A|B_0 B_1 \cdots B_{2^n-1}}\right)=E_a\left(\rho_{A|B_0 B_1 \cdots B_{N-1}}\right),
\label{psame1}
\end{align}
and
\begin{align}
E_a\left(\gamma_{A|B_j}\right)=0,
\label{psame3}
\end{align}
for $j=N, \ldots , 2^n-1$.
Because
\begin{align}
\gamma_{AB_j}=\rho_{AB_j},
\label{psame2}
\end{align}
for each $j=0, \ldots , N-1$,
we have
\begin{align}
\left(E_a\left(\rho_{A|B_0 B_1 \cdots B_{N-1}}\right)\right)^{\beta}=&
\left(E_a\left(\gamma_{A|B_0 B_1 \cdots B_{2^n-1}}\right)\right)^{\beta}\nonumber\\
\leq&
\sum_{j=0}^{2^n-1}{\beta}^{\omega_{H}\left(\overrightarrow{j}\right)}
\left(E_a\left(\gamma_{A|B_j}\right)\right)^{\beta}\nonumber\\
=&\sum_{j=0}^{N-1}{\beta}^{\omega_{H}\left(\overrightarrow{j}\right)}
\left(E_a\left(\rho_{A|B_j}\right)\right)^{\beta},
\label{gapoly2}
\end{align}
and this completes the proof.
\end{proof}

To illustrate the tightness of Inequality~(\ref{peoapoly0}) compared with Inequality~(\ref{EoApoly}) in previous section,
let us consider the three-qubit W state
\begin{align}
\ket{W}_{ABC} =\frac{1}{\sqrt 3}\left(\ket{100}+\ket{010}+\ket{001}\right).
\label{3W}
\end{align}
Because it is a pure state, we have
\begin{align}
E_a\left(\rho_{A|BC}\right)=&S(\rho_A)
=\log3-\frac{2}{3},
\label{EoA1}
\end{align}
and the EoA of the two-qubit reduced density matrices are~\cite{sahoo}
\begin{align}
E_a\left(\rho_{A|B}\right)=E_a\left(\rho_{A|C}\right)=\frac{2}{3}.
\label{EoA123}
\end{align}
Thus, the marginal EoA from Inequality~(\ref{EoApoly}) is
\begin{align}
E_a\left(\rho_{A|B}\right)&+E_a\left(\rho_{A|C}\right)\nonumber\\
&-E_a\left(\rho_{A|BC}\right)=2-\log3 \approx 0.415.
\label{mareoa}
\end{align}

For the cases that $\beta=\frac{1}{2}$ or $\frac{1}{3}$, the marginal EoA's from Inequality~(\ref{peoapoly0}) for three-qubit W state are
\begin{align}
\sqrt{E_a\left(\rho_{A|B}\right)}&+\frac{1}{2}\sqrt{E_a\left(\rho_{A|C}\right)}\nonumber\\
&-\sqrt{E_a\left(\rho_{A|BC}\right)}\approx 0.272,\nonumber\\
E_a\left(\rho_{A|B}\right)^{1/3}&+\frac{1}{3}E_a\left(\rho_{A|C}\right)^{1/3}\nonumber\\
&-E_a\left(\rho_{A|BC}\right)^{1/3}\approx 0.196.
\label{mareoa1/2}
\end{align}
Thus Inequality~(\ref{peoapoly0}) is generally tighter than Inequality~(\ref{EoApoly}), which also delivers better bounds to characterize
the W-class type three-party entanglement by means of bipartite ones.

For any $0\leq \beta \leq 1$ and the Hamming weight $\omega_{H}\left(\overrightarrow{j}\right)$ of the binary vector $\overrightarrow{j}=\left(j_0, \ldots ,j_{n-1}\right)$, we have $0\leq {\beta}^{\omega_{H}\left(\overrightarrow{j}\right)}\leq1$, therefore
\begin{align}
\left(E_a\left(\rho_{A|B_0 B_1 \cdots B_{N-1}}\right)\right)^{\beta} \leq&
\sum_{j=0}^{N-1}{\beta}^{\omega_{H}\left(\overrightarrow{j}\right)}\left(E_a\left(\rho_{A|B_j}\right)\right)^{\beta}\nonumber\\
\leq&\sum_{j=0}^{N-1}\left(E_a\left(\rho_{A|B_j}\right)\right)^{\beta},
\label{ineqtight1}
\end{align}
for any multiparty state $\rho_{AB_0 B_1 \cdots B_{N-1}}$. Thus we have the following corollary;
\begin{Cor}
For $0 \leq \beta \leq 1$ and any multiparty quantum state $\rho_{AB_0\cdots B_{N-1}}$, we have
\begin{equation}
\left(E_a\left(\rho_{A|B_0 B_1 \cdots B_{N-1}}\right)\right)^{\beta}  \leq
\sum_{j=0}^{N-1}\left(E_a\left(\rho_{A|B_j}\right)\right)^{\beta}.
\label{peoapoly1}
\end{equation}
\label{Cor: poly}
\end{Cor}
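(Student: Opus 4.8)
The plan is to obtain Corollary~\ref{Cor: poly} as a direct weakening of Theorem~\ref{tpoly0}, the weighted polygamy inequality. The only ingredient needed beyond the theorem is the elementary observation that, for $0\le\beta\le1$, every weight $\beta^{\omega_{H}(\overrightarrow{j})}$ lies in the interval $[0,1]$ — with the convention $\beta^{0}=1$ for $\overrightarrow{j}=\overrightarrow{0}$ — so replacing each weight by $1$ can only enlarge the right-hand side. Thus the corollary is really just Theorem~\ref{tpoly0} with all the Hamming-weight refinements discarded.

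Concretely, I would first apply Theorem~\ref{tpoly0} to the given $(N+1)$-party state $\rho_{AB_0\cdots B_{N-1}}$: for $0\le\beta\le1$ there is an ordering of $B_0,\dots,B_{N-1}$ such that
\begin{equation*}
\left(E_a\left(\rho_{A|B_0B_1\cdots B_{N-1}}\right)\right)^{\beta}\le\sum_{j=0}^{N-1}\beta^{\omega_{H}(\overrightarrow{j})}\left(E_a\left(\rho_{A|B_j}\right)\right)^{\beta}.
\end{equation*}
Since $E_a(\rho_{A|B_j})\ge0$ and $0\le\beta^{\omega_{H}(\overrightarrow{j})}\le1$, each summand satisfies $\beta^{\omega_{H}(\overrightarrow{j})}\left(E_a(\rho_{A|B_j})\right)^{\beta}\le\left(E_a(\rho_{A|B_j})\right)^{\beta}$; summing over $j$ and chaining with the displayed bound yields Inequality~(\ref{peoapoly1}). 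I would then remark that the sum $\sum_{j=0}^{N-1}\left(E_a(\rho_{A|B_j})\right)^{\beta}$ is invariant under any permutation of the labels $B_0,\dots,B_{N-1}$, so the particular decreasing ordering demanded by Theorem~\ref{tpoly0} plays no role in the conclusion and can be dropped from the hypothesis of the corollary; the statement holds for the state as written.

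I do not anticipate a genuine obstacle here. The one step that could in principle be delicate — the division by $E_a(\rho_{A|B_0})$ that appears inside the proof of Theorem~\ref{tpoly0} and would force a nondegeneracy assumption on the bipartite EoA's — has already been absorbed into the theorem, and the passage to the corollary uses only the termwise bound $\beta^{\omega_{H}(\overrightarrow{j})}\le1$, which holds unconditionally. If one preferred to bypass Theorem~\ref{tpoly0} entirely, an equally short route is available: combine the general polygamy inequality~(\ref{EoApoly}) with the monotonicity of $x\mapsto x^{\beta}$ and the subadditivity $\left(\sum_j a_j\right)^{\beta}\le\sum_j a_j^{\beta}$ valid for $0\le\beta\le1$ and $a_j\ge0$. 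Routing through Theorem~\ref{tpoly0} is cleaner, however, since it exhibits Corollary~\ref{Cor: poly} transparently as the coarse ($\beta^{\omega_{H}}\!\to1$) limit of the weighted bound.
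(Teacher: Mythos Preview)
Your proposal is correct and follows essentially the same approach as the paper: the paper derives the corollary in the paragraph immediately preceding it by invoking Theorem~\ref{tpoly0} and then using $0\le\beta^{\omega_H(\overrightarrow{j})}\le1$ to bound the weighted sum by the unweighted one, exactly as you do. Your added remark on permutation invariance of the right-hand side (to dispose of the ordering hypothesis) is a clean touch that the paper leaves implicit.
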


We further note that the class of weighted polygamy inequalities in Theorem~\ref{tpoly0} can even be tightened
with some condition on bipartite entanglement of assistance.
\begin{Thm}
For $0 \leq \beta \leq 1$ and any multiparty quantum state $\rho_{AB_0 \cdots B_{N-1}}$, we have
\begin{equation}
\left(E_a\left(\rho_{A|B_0 \cdots B_{N-1}}\right)\right)^{\beta}  \leq
\sum_{j=0}^{N-1}{\beta}^{j}\left(E_a\left(\rho_{A|B_j}\right)\right)^{\beta},
\label{peoapoly2}
\end{equation}
conditioned that
\begin{align}
E_a\left(\rho_{A|B_i}\right)\geq \sum_{j=i+1}^{N-1}E_a\left(\rho_{A|B_{j}}\right),
\label{cond3}
\end{align}
for $i=0, \ldots , N-2$.
\label{tpoly2}
\end{Thm}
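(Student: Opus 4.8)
The plan is to reuse the two-step strategy of the proof of Theorem~\ref{tpoly0}. First I would combine the general polygamy inequality~(\ref{EoApoly}) with the monotonicity of $f(x)=x^{\beta}$ on $[0,\infty)$ for $0\leq\beta\leq1$ to obtain
\begin{align}
\left(E_a\left(\rho_{A|B_0 \cdots B_{N-1}}\right)\right)^{\beta}\leq\left(\sum_{j=0}^{N-1}E_a\left(\rho_{A|B_j}\right)\right)^{\beta},
\label{tp2-step1}
\end{align}
so that it suffices to establish the numerical inequality
\begin{align}
\left(\sum_{j=0}^{N-1}E_a\left(\rho_{A|B_j}\right)\right)^{\beta}\leq\sum_{j=0}^{N-1}{\beta}^{j}\left(E_a\left(\rho_{A|B_j}\right)\right)^{\beta}
\label{tp2-step2}
\end{align}
under the hypothesis~(\ref{cond3}). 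Writing $E_j:=E_a\left(\rho_{A|B_j}\right)$, I would prove~(\ref{tp2-step2}) by induction on $N$; the case $N=1$ is trivial, and $N=2$ is exactly Inequality~(\ref{p3scpoly3}), since~(\ref{cond3}) then reduces to $E_0\geq E_1$.

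For the inductive step I would assume~(\ref{tp2-step2}) for every $(N-1)$-subsystem collection obeying~(\ref{cond3}), and let $B_0,\ldots,B_{N-1}$ satisfy~(\ref{cond3}). The key observation is that the instances $i=1,\ldots,N-2$ of~(\ref{cond3}) are, after relabeling $B_{j}\mapsto B_{j-1}$, precisely condition~(\ref{cond3}) for the $(N-1)$-subsystem collection $\{B_1,\ldots,B_{N-1}\}$, so the induction hypothesis gives
\begin{align}
\left(\sum_{j=1}^{N-1}E_j\right)^{\beta}\leq\sum_{j=1}^{N-1}{\beta}^{j-1}E_j^{\beta}.
\label{tp2-IH}
\end{align}
Then I would factor $\left(\sum_{j=0}^{N-1}E_j\right)^{\beta}=E_0^{\beta}\left(1+\sum_{j=1}^{N-1}E_j/E_0\right)^{\beta}$ and use the $i=0$ instance of~(\ref{cond3}), namely $E_0\geq\sum_{j=1}^{N-1}E_j$, to see that the ratio lies in $[0,1]$; Inequality~(\ref{betle1}) then yields $\left(\sum_{j=0}^{N-1}E_j\right)^{\beta}\leq E_0^{\beta}+\beta\left(\sum_{j=1}^{N-1}E_j\right)^{\beta}$. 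Substituting~(\ref{tp2-IH}) and using $\beta\cdot\beta^{j-1}=\beta^{j}$ gives $\left(\sum_{j=0}^{N-1}E_j\right)^{\beta}\leq E_0^{\beta}+\sum_{j=1}^{N-1}\beta^{j}E_j^{\beta}=\sum_{j=0}^{N-1}\beta^{j}E_j^{\beta}$, closing the induction and hence proving~(\ref{tp2-step2}) and the theorem.

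I do not expect a genuine obstacle here: the argument is the single inductive step in the proof of Theorem~\ref{tpoly0} carried out one subsystem at a time rather than by halving, and the strengthened hypothesis~(\ref{cond3}) --- which in particular forces $E_0\geq E_1\geq\cdots\geq E_{N-1}\geq0$ --- is exactly what keeps every successive ratio in the interval $[0,1]$ where~(\ref{betle1}) is valid. The one point deserving a remark is the degenerate case $E_0=0$: then~(\ref{cond3}) forces $E_j=0$ for all $j$, both sides of~(\ref{tp2-step2}) vanish, and the factorization step is simply avoided. Note finally that, in contrast with Theorem~\ref{tpoly0}, no reordering of the subsystems is invoked: the inequality is asserted for the fixed ordering $B_0,\ldots,B_{N-1}$ in which~(\ref{cond3}) is assumed to hold.
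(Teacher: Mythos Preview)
Your proposal is correct and follows essentially the same route as the paper: reduce to the numerical inequality via~(\ref{EoApoly}) and monotonicity of $x^{\beta}$, then induct on $N$ by peeling off $B_0$, using the $i=0$ instance of~(\ref{cond3}) to put the ratio in $[0,1]$ so that~(\ref{betle1}) applies, and invoking the induction hypothesis on $B_1,\ldots,B_{N-1}$. Your write-up is in fact slightly more careful than the paper's in two places: you explicitly check that the remaining instances $i=1,\ldots,N-2$ of~(\ref{cond3}) furnish exactly the hypothesis needed for the subcollection, and you dispose of the degenerate case $E_0=0$ separately.
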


\begin{proof}
Inequality~(\ref{qbitalppoly1}) assures that it is enough to show
\begin{align}
\left(\sum_{j=0}^{N-1}E_a\left(\rho_{A|B_j}\right)\right)^{\beta}\leq&
\sum_{j=0}^{N-1}{\beta}^{j}\left(E_a\left(\rho_{A|B_j}\right)\right)^{\beta}.
\label{qditalppoly3}
\end{align}
We use the mathematical induction on $N$, and we also note that
Inequality~(\ref{p3scpoly3}) guarantees the validity of Inequality~(\ref{qditalppoly3}) for $N=2$.

Let us assume Inequality~(\ref{qditalppoly3}) is true for any positive integer less than $N$, and consider
a multiparty quantum state $\rho_{AB_0 \cdots B_{N-1}}$.
The condition in Inequality~(\ref{cond3}) assures
\begin{equation}
0\leq\frac{\sum_{j=1}^{N-1}E_a\left(\rho_{A|B_j}\right)}
{E_a\left(\rho_{A|B_0}\right)}\leq 1,
\label{conoder2}
\end{equation}
thus, we have
\begin{widetext}
\begin{align}
\left(\sum_{j=0}^{N-1}E_a\left(\rho_{A|B_j}\right)\right)^{\beta}
=&\left(E_a\left(\rho_{A|B_0}\right)\right)^{\beta}
\left(1+\frac{\sum_{j=1}^{N-1}E_a\left(\rho_{A|B_j}\right)}
{E_a\left(\rho_{A|B_0}\right)} \right)^{\beta}\nonumber\\
\leq&\left(E_a\left(\rho_{A|B_0}\right)\right)^{\beta}
\left[1+\beta\left(\frac{\sum_{j=1}^{N-1}E_a\left(\rho_{A|B_j}\right)}
{E_a\left(\rho_{A|B_0}\right)}\right)^{\beta}\right]\nonumber\\
=&\left(E_a\left(\rho_{A|B_0}\right)\right)^{\beta}+\beta\left(\sum_{j=1}^{N-1}E_a\left(\rho_{A|B_j}\right)
\right)^{\beta},
\label{pnscpoly6}
\end{align}
\end{widetext}
where the inequality is due to Inequality~(\ref{betle1}).

The summation in the last line of~(\ref{pnscpoly6}) is a summation of $N-1$ terms, therefore
the induction hypothesis leads us to
\begin{align}
\left(\sum_{j=1}^{N-1}E_a\left(\rho_{A|B_j}\right)
\right)^{\beta}\leq
\sum_{j=1}^{N-1}{\beta}^{j-1}\left(E_a\left(\rho_{A|B_j}\right)\right)^{\beta}.
\label{pnscpoly7}
\end{align}
Now, Inequalities~(\ref{pnscpoly6}) and (\ref{pnscpoly7}) recover Inequality~(\ref{qditalppoly3}),
and this completes the proof.
\end{proof}

For any nonnegative integer $j$ and its corresponding binary vector $\overrightarrow{j}$, the Hamming weight $\omega_{H}\left(\overrightarrow{j}\right)$ is bounded above by $\log_2 j$. Thus we have
\begin{align}
\omega_{H}\left(\overrightarrow{j}\right)\leq \log_2 j \leq j,
\label{numcom}
\end{align}
therefore
\begin{align}
\left(E_a\left(\ket{\psi}_{A|B_0 \cdots B_{N-1}}\right)\right)^{\beta}  \leq&
\sum_{j=0}^{N-1}{\beta}^{j}\left(E_a\left(\rho_{A|B_j}\right)\right)^{\beta}\nonumber\\
\leq& \sum_{j=0}^{N-1}{\beta}^{\omega_{H}\left(\overrightarrow{j}\right)}\left(E_a\left(\rho_{A|B_j}\right)\right)^{\beta},
\label{ineqtight3}
\end{align}
for $0\leq \beta \leq 1$. Thus, Inequality~(\ref{peoapoly2}) of Theorem~\ref{tpoly2} is tighter than Inequality~(\ref{peoapoly0}) of
Theorem~\ref{tpoly0} for $0\leq \beta \leq 1$ and any multiparty quantum state $\rho_{AB_0 B_1 \cdots B_{N-1}}$ satisfying the condition in Inequality~(\ref{cond3}).

\section{Conclusions}\label{Sec: Conclusion}
We have provided a generalization for the polygamy constraint of multiparty entanglement in arbitrary dimensional quantum systems.
By using the $\beta$th-power of entanglement of assistance for $0\leq \beta \leq1$ and the Hamming weight of the binary vector related with the distribution of subsystems, we have establish a class of weighted polygamy inequalities of multiparty entanglement in arbitrary dimensional quantum systems. We have further shown that our class of weighted polygamy inequalities can be improved to be tighter inequalities with some conditions on the assisted entanglement of bipartite subsystems.

The study of higher-dimensional quantum systems is important and even necessary in various
quantum information and communication processing tasks. For instance,
qudit systems for $d>2$ are sometimes preferred in quantum cryptography such as in quantum key distribution where
the use of qudits increases coding density and provides stronger security compared to qubits~\cite{gjvw}.

However, the entanglement properties in higher-dimensional systems are hardly known so far, and
the generalization of the multiparty entanglement analysis, especially the monogamy and polygamy constraints from qubit to qudit case is far
more than trivial. Thus even fundamental steps of the challenges to the richness of entanglement
studies for system of multiparty higher-dimensions systems would be necessary and fruitful to understand the whole picture of quantum entanglement.

Our results presented here deal with a generalized polygamy constraints of multyparty entanglement in arbitrary higher dimensional
quantum systems. Moreover, our class of polygamy inequalities provide tighter constraints which can also provide finer
characterizations of the entanglement distributions among the multiparty systems. Noting the importance of the study on multiparty quantum entanglement especially in higher dimensional quantum systems, our result can provide a rich reference for future work on the study of multiparty quantum entanglement.

\section*{Acknowledgments}
This work was supported by Basic Science Research Program through the National Research Foundation of Korea(NRF)
funded by the Ministry of Education(NRF-2017R1D1A1B03034727) and a grant from Kyung Hee University in 2017(KHU-20170716).


\end{document}